\documentclass[titlepage, 12pt]{article}

% \usepackage[utf8]{inputenc}  %unicode support
% \usepackage[applemac]{inputenc}
% applemac support if unicode package fails
% \usepackage[latin1]{inputenc} %UNIX support if unicode package fails

\usepackage{authblk}
\usepackage[margin=1in]{geometry}

\usepackage{algorithm}
\usepackage{algpseudocode}
\usepackage{amsbsy, amsfonts,
amsmath, amssymb, amsthm}
\newtheorem{prop}{Proposition}
\usepackage{multirow}
\usepackage{xcolor}

% The bbm package, which you might want to use to write blackboard bold numbers,
% is a source of Type 3 fonts. Try the dsfont package instead
\usepackage{dsfont}

\usepackage{bm}

\usepackage{booktabs, threeparttable}
\usepackage{setspace}

\usepackage{natbib}

\usepackage{color}

\usepackage{graphicx}
\usepackage{physics}

\usepackage{tikz}
\usetikzlibrary{matrix}

\usepackage[colorlinks=true, citecolor=blue]{hyperref}
\usepackage{mathrsfs}

% \numberwithin{equation}{section} %number equations under section

\usepackage[pagewise]{lineno}
%\linenumbers*[1]
%% patches to make lineno work better with amsmath
\newcommand*\patchAmsMathEnvironmentForLineno[1]{%
  \expandafter\let\csname old#1\expandafter\endcsname\csname #1\endcsname
  \expandafter\let\csname oldend#1\expandafter\endcsname\csname end#1\endcsname
  \renewenvironment{#1}%
  {\linenomath\csname old#1\endcsname}%
  {\csname oldend#1\endcsname\endlinenomath}}%
\newcommand*\patchBothAmsMathEnvironmentsForLineno[1]{%
  \patchAmsMathEnvironmentForLineno{#1}%
  \patchAmsMathEnvironmentForLineno{#1*}}%
\AtBeginDocument{%
  \patchBothAmsMathEnvironmentsForLineno{equation}%
  \patchBothAmsMathEnvironmentsForLineno{align}%
  \patchBothAmsMathEnvironmentsForLineno{flalign}%
  \patchBothAmsMathEnvironmentsForLineno{alignat}%
  \patchBothAmsMathEnvironmentsForLineno{gather}%
  \patchBothAmsMathEnvironmentsForLineno{multline}%
}

% control floats

\setcounter{totalnumber}{50}
\setcounter{topnumber}{50}
\setcounter{bottomnumber}{50}

\allowdisplaybreaks % \usepackage{rotating}

\newcommand{\ind}{d^{\,(\rm in)}}
\newcommand{\outd}{d^{\,(\rm out)}}
\newcommand{\ins}{s^{(\rm in)}}
\newcommand{\outs}{s^{(\rm out)}}
\newcommand{\tildeins}{\tilde{s}^{(\rm in)}}
\newcommand{\tildeouts}{\tilde{s}^{(\rm out)}}
\newcommand{\adj}{\bm{A}}
\newcommand{\wei}{\bm{W}}
\newcommand{\ER}{Erd\"{o}s--R\'{e}nyi}
\newcommand{\BA}{Barab\'{a}si--Albert}

\newcommand{\norminw}{\tilde{w}^{(\rm in)}}
\newcommand{\normoutw}{\tilde{w}^{(\rm out)}}

% consistent with R manual
\newcommand{\pkg}[1]{{\normalfont\fontseries{b}\selectfont #1}}
\let\proglang=\textsf

% graphic path
\graphicspath{{./}{../figures/}{../figures/PAwd_loc/}{../figures/PAwd/}}

\begin{document}

\title{Assortativity measures for weighted and directed networks}

\author[1]{Yelie Yuan}
\affil[1]{Department of Statistics, University of Connecticut,
Storrs, CT 06269}
\author[1]{Jun Yan}
\author[2,$\ast$]{Panpan Zhang}
\affil[2]{Department of Biostatistics, Epidemiology and
  Informatics, University of Pennsylvania, Philadelphia, PA 19104}
\affil[$\ast$]{Corresponding author: 
\href{mailto:panpan.zhang@pennmedicine.upenn.edu}{panpan.zhang@pennmedicine.upenn.edu}}

\maketitle

\doublespacing

\begin{abstract}
  Assortativity measures the tendency
  of a vertex in a network being connected by other vertexes with
  respect to some vertex-specific features.  Classical assortativity
  coefficients are defined for unweighted and undirected networks with respect
  to vertex degree. We
  propose a class of assortativity coefficients that capture the assortative
  characteristics and structure of weighted and directed networks
  more precisely. The vertex-to-vertex strength correlation is used
  as an example, but the proposed measure can be applied to any pair of
  vertex-specific features. The effectiveness of the proposed measure is
  assessed through extensive simulations based on prevalent random
  network models in comparison with existing assortativity measures.
  In application World Input-Ouput Networks,
  the new measures reveal interesting insights that would
  not be obtained by using existing ones. An implementation is
  publicly available in a \proglang{R} package \pkg{wdnet}.
\end{abstract}

\section{Introduction}
\label{Sec:intro}

In traditional network analysis, {\em assortativity} or {\em
assortative mixing} \citep{Newman2002assortative} is a measure
assessing the preference of a vertex being connected (by edges) with
other vertexes in a network. The measure reflects
the principle of {\em homophily} \citep{McPherson2001birds}---the
tendency of the entities to be associated with similar partners in a
social network. The primitive assortativity measure proposed by
\citet{Newman2002assortative} was defined to
study the tendency of connections between nodes based on their {\em
degrees}, which is why it is also called degree-degree correlation
\citep{VanderHofstad2014degree}. Assortativity can
be defined on any vertex feature, although vertex degree is
the most popular \citep{Newman2002assortative, Newman2003mixing}.
The measure has many applications in different types of
networks; for example, the collaborative
network among the authors on the arXiv preprint platform
\citep{Catanzaro2004assortative}, the transaction network of
country-sectors in the global economy \citep{Cerina2015world}, the
bilateral trade relationships among the countries in the
International Trade Network \citep{Abbate2018distance}, and the
associations among the geographical regions in an earthquake network
\citep{Abe2006complex}.

Classical assortativity measures and their extensions are defined for
unweighted, undirected networks. \citet{Piraveenan2008local}
introduced a vertex-based local assortativity measure quantifying
the contribution to the global assortativity of the network
by each vertex. \citet{VanMieghem2010influence}
reformulated the measure of \citet{Newman2002assortative},
which led to a degree-preserving
rewiring algorithm for unweighted and undirected networks.
Mathematically, Newman's assortativity measure is
analogous to the sample {\em Pearson correlation coefficient} in
statistics. In the context of
networks, this measure
has a limitation that its magnitude depends on the
network size \citep{Dorogovtsev2010zero, Raschke2010measuring}.
\citet{Litvak2013uncovering} gave examples of generative
networks whose assotativity coefficients are expected to be negative
but Newman's measure gives zero when the network
sizes are large. This motivated \citet{Litvak2013uncovering} to
propose a rank-based assortativity measure analogous to the sample
\emph{Spearman correlation coefficient}, which partially
overcomes the limitation. All the definitions mentioned so far only
involve directly connected vertices. To account for the possibility
of a strong association induced by a multi-step path between
two vertices, \citet{Arcagni2017higher} introduced a series of
high-order assortativity measures quantifying the association among
the vertices through a few network properties, including paths,
shortest paths, and random walks with fixed length.

When a network is weighted (i.e., non-unit weights are assigned to
the edges), it is necessary to extend Newman's assortativity
appropriately to a weighted version. Ignoring the weights may lead to
misspecification of the network model and biased inferences.
\citet{Leung2007weighted} introduced a weighted assortativity
coefficient for weighted networks, where each edge weight is
interpreted as the intensity of the interaction between the ending
vertexes.
Specifically, they used the world-wide airport network and the
scientist collaboration network \citep{Barrat2004thearchitecture} as
examples. More recently, \citet{Arcagni2019extending} extended
the concept of high-order assortativities \citep{Arcagni2017higher}
to a class of assortativity measures applicable to weighted
networks by building a connection between
assortativity and {\em centrality}~\citep{Newman2010networks} via
edge weights. The new
class
includes four cases, where cases~1 and~3 were respectively
generalized from the counterparts given in
\citet{Newman2002assortative} and
\citet{Leung2007weighted}, and cases~2 and~4 were further extended
from cases~1 and~3, respectively.

When edge direction needs to be accounted, assortativity is expected
to distinguish the features of source and target vertices. Such
features can be either in-degree or out-degree.
\citet{Newman2003mixing} extended the
classical assortativity \citep{Newman2002assortative} to an out-in
assortativity. In the context of directed biological
networks, \citet{Piraveenan2012assortative}
proposed out-out and in-in assortativity measures, and named them,
respectively, out-assortativity and in-assortativity.
The out-assortativity and in-assortativity of the
studied directed biological networks generally higher
than out-in assortativity coefficient.
\citet{Foster2010edge} introduced four kinds of degree-degree
correlations, $\left(\rm out, in\right)$, $\left(\rm out,
out\right)$, $\left(\rm in, out\right)$ and $\left(\rm in,
in\right)$, where the $\left(\rm out, in\right)$ correlation is
equivalent to that in \citet{Newman2003mixing}, and the
$\left(\rm out, out\right)$ and $\left(\rm in, in\right)$
correlations are respectively equivalent to the out- and
in-assortativity in \citet{Piraveenan2012assortative}.
\citet{Foster2010edge} showed that some real networks (e.g., food
webs) displayed assortative and disassortative mixing simultaneously.

Many networks in practice are both directed and weighted. For
instance, consider the \emph{World Input-Output Network} (WION)
defined by a {\em World Input-Output Table}
\citep[WIOT,][]{Timmer2015anillustrated}. Each vertex in a WION represents a
region-sector, i.e., a sector in a country, region, or territory.
A directed edge represents the monetary amount of the transactions
from its source region-sector to its target region-sector, with the amount
defining the edge weight. Despite the applications of network modeling
tools to WIONs \citep[e.g.,][]{Cerina2015world}, there is rather
limited work on assortativity measures for weighted and directed
networks in the literature. The classical assortativity
measure cannot appropriately quantify the assortativity in a WION.
The structure of a WION provides a vehicle to study any feature
between the
source region-sectors and target region-sectors, which could provide
insights into the assortative or disassortative mixing beyond what
has been typically studied.

We summarize the three major contributions of the present paper.
First, we propose a class of assortativity measures for weighted and
directed networks. The proposed measures can be used to analyze the
correlation structure of both network-based properties and
vertex-specific features; hence, they are general. Second, the
necessity of incorporating weight and direction is illustrated via
extensive numerical studies on widespread network models. We assert
that edge weight is sine qua non of the assortativity of weighted
networks, while ignoring it in the computation would fail
to capture network features and possibly lead to incorrect
inference. Finally, we apply the
proposed assortativity measures to the annual series of WIONs from 
2000 to 2004 to investigate the
association among various region-sectors.

\section{Weighted and Directed Assortativity Coefficients}
\label{Sec:novel}

Let $G(V, E)$ be a network consisting of a set of vertices $V$ and a
set of edges
$E$. The cardinalities of $V$ and~$E$, respectively denoted by $|V|$
and $|E|$, are called the {\em order} and the {\em size} of $G$. For
any $i \neq j \in V$, let $e_{ij}$ denote a {\em directed} edge
(excluding self-loop) from vertex~$i$ to $j$. We use
$\adj^{(G)} := (a_{ij})_{|V| \times
|V|}$ to represent the {\em adjacency matrix} of an unweighted network $G(V, E)$
such that $a_{ij} = 1$ if
$e_{ij} \in E$ and $a_{ij} = 0$, otherwise. When $G(V, E)$ is weighted,
we use $\wei^{(G)} := (w_{ij})_{|V| \times |V|}$ to represent the
weighted analogue of $\adj^{(G)}$. That is, $w_{ij} > 0$ is the weight
of $e_{ij}$ if $e_{ij} \in E$; $w_{ij} = 0$, otherwise. Apparently,
$\adj^{(G)}$ and $\wei^{(G)}$ are
identical for unweighted networks. For ease of notation, we
suppress the superscripts of
$\adj^{(G)}$ and $\wei^{(G)}$ in the sequel.

The {\em degree} of a vertex in an undirected network is
the number of edges incident to it. In directed networks, the {\em
out-degree} and {\em in-degree} of a vertex are the number of edges
emanating from and pointing to it, respectively. We use
$d_i$, $\ind_i$ and $\outd_i$ for the
degree, the in-degree, and the out-degree of vertex~$i$,
respectively. When
edge weights are present, we define {\em out-strength} and {\em
in-strength} of a vertex as the total weights of the edges emanating
from and pointing to it. The total {\em strength} (or simply
strength) of a vertex is the sum of its in-strength and out-strength.
The strength, in-strength, and out-strength of vertex $i$ are
respectively denoted by $s_i$, $\ins_i$ and $\outs_i$.

In classical network analysis, assortativity is usually referred to
a degree-degree correlation. In general, assortativity can be used
as a tool measuring the association between any pair of vertex
features. Let $X$ and $Y$ be two quantitative features
for all the vertices in a weighted and directed network
$G(V, E)$. Let $(X_i, Y_i)$ be the two features for each vertex $i \in V$.
Our weighted and directed assortativity measure based on the sample
Pearson
correlation coefficient is defined as
\begin{equation}
	\label{eq:cor}
	\rho_{X,Y}(G) =
        \frac{\sum_{i,j \in V} w_{ij}
          (X_i - \bar{X}_{\rm sou}) (Y_j - \bar{Y}_{\rm tar})}
        {W \sigma_{X}	\sigma_{Y}},
\end{equation}
where $W := \sum_{i,j \in V} w_{ij}$ is the sum of edge weights,
$\bar{X}_{\rm sou}$ is the average of feature $X$ based on source
vertices, $\bar{X}_{\rm sou} = \sum_{i,k \in V} w_{ik} X_i / W$ is
the average of feature $X$ based on source vertices,
$\bar{Y}_{\rm tar} = \sum_{k,j \in V} w_{kj} Y_j /W$
is the average of feature $Y$ based on target vertices,
$\sigma_X = \sqrt{\sum_{i,k \in V} w_{ik} (X_i - \bar{X}_{\rm sou})^2 / W}$
and
$\sigma_Y = \sqrt{\sum_{k,j \in V} w_{kj} (Y_j - \bar{Y}_{\rm tar})^2 / W}$
are the associated standard deviations, respectively.
The two features $X$ and $Y$
could be the same feature such as the total strength.

Specifically, we look into strength-strength correlations by
incorporating edge weights with the four (directed but unweighted)
assortativity measures proposed by \citet{Foster2010edge}. For each
edge, let $\alpha, \beta \in \left\{{\rm in}, {\rm out} \right\}$
index the strength type of the vertices at the two ends. We
accordingly have four kinds of extended assortativity measures for
a weighted and directed network $G$, denoted
$\rho_{\alpha,\beta}(G)$. Note that
Equation~\eqref{eq:cor} is well defined for any $w_{ij} \in \mathbb{R}^{+}$
for $i,j \in V$. In the special case where all edge weights $w_{ij}$'s are
positive integers, each edge can be regarded as a {\em multi-edge},
structurally equivalent to $w$ edges with unit weight.

According to Equation~\eqref{eq:cor}, we define the
$(\alpha, \beta)$-type assortativity coefficient of a weighted and
directed network $G(V,E)$ as follows:
\begin{equation}
\label{eq:fourd}
\displaystyle
\rho_{\alpha, \beta}(G) =
\frac{
  \sum_{i, j \in V} w_{ij}
  \left[\left(s_i^{(\alpha)} - \bar{s}_{\rm sou}^{\, (\alpha)} \right)
    \left(s_j^{(\beta)} - \bar{s}_{\rm tar}^{\, (\beta)} \right)\right]
}
{W \sigma_{\rm sou}^{(\alpha)} \sigma_{\rm tar}^{(\beta)}},
\end{equation}
where $s_{i}^{(\alpha)}$ is the $\alpha$-type strength of
the source vertex $i$,
\begin{equation*}
	\bar{s}_{\rm sou}^{\, (\alpha)} = \frac{\sum_{i,j \in V} w_{ij}
		s_i^{\, (\alpha)}}{W} = \frac{\sum_{i \in V} \outs_i
		s_i^{(\alpha)}}{W}
\end{equation*}
is the weighted mean of the
$\alpha$-type strengths of the source vertices and
\begin{equation*}
	\sigma_{\rm sou}^{(\alpha)} = \sqrt{\frac{\sum_{i,k \in V}
	w_{ik} \left(s_i^{(\alpha)} - \bar{s}_{\rm sou}^{\, (\alpha)}
	\right)^2}{W}} = \sqrt{\frac{\sum_{i \in V} \outs_i
	\left(s_i^{(\alpha)} -
		\bar{s}_{\rm sou}^{\, (\alpha)} \right)^2}{W}}
\end{equation*}
is the associated weighted standard deviation. The counterparts
$s_j^{(\beta)}$, $\bar{s}_{\rm tar}^{\, (\beta)}$ and
$\sigma_{\rm tar}^{(\beta)}$ are
defined analogously for the target vertex~$j$, where the weights in
the weighted mean and the weighted standard deviation are replaced
with $\ins_j$. An illustrative example of the proposed assortativity
measures is given in Appendix~\ref{Append:example}.

The value of $\rho_{\alpha, \beta}(G)$ is bounded between $-1$ and
$1$. The sign indicates the direction of the assortativity. A larger magnitude
indicates a higher assortativity between the vertices with large $\alpha$-type
strength and those with $\beta$-type strength in $G$. The boundary
values are attainable in some extreme cases. Several examples are
presented in Appendix~\ref{Append:boundary}. The definition of
$\rho_{\alpha,\beta}(G)$ is a sample measure, which should be
fluctuating around
the population measure of the network model that generates the
observed
network. When $G$ is completely random, for instance, as generated from the \ER\
random network model \citep{Erdos1959on}, the population measure should be
zero as there is no tendency of vertex connection. The coefficient will be
close to zero when the network size is sufficiently large.

The proposed assortativity coefficient in Equation~\eqref{eq:fourd}
is well-defined for
networks that are unweighted, undirected or both. Specifically, it is equivalent
to that given in \citet[case 4]{Arcagni2019extending} for weighted but
undirected network. When the vertex strengths are
replaced with degrees, the proposed assortativity coefficient is
simplified to \citet[case~3]{Arcagni2019extending}, mathematically
identical to the measure introduced by \citet{Leung2007weighted}.
See Proposition~\ref{Prop:weightundirect} in Appendix~\ref{Append:math}
for a mathematical justification. For unweighted but directed networks, we
have $w_{ij} = 1$ for all $e_{ij} \in E$, implying that $W$ counts
the number of edges. It is obvious that Equation~\eqref{eq:fourd} is
equivalent to the four-directed assortativity given by
\citet{Foster2010edge}. For some specific choices of $\alpha$ and
$\beta$, Equation~\eqref{eq:fourd} is equivalent to some results
developed in \citet{Newman2003mixing, Piraveenan2012assortative},
demonstrated in Appendix~\ref{Append:math} as well. For unweighted
and undirected networks, Equation~\eqref{eq:fourd} is further
simplified to the measure proposed by \citet{Newman2002assortative},
shown in Appendix~\ref{Append:math}.

\section{Illustrations}
\label{Sec:sim}

We illustrate the proposed assortativity coefficients in the context of a few
widely used random network models extended to allow direction and weight in
edges. The new assortativity measures are shown to
provide more insights for weighted and directed networks.

\subsection{\ER\ Random Networks}
\label{Sec:ermodel}

We extend the classical \ER\ (ER)  random network model
\citep{Erdos1959on, Gilbert1959random} by incorporating independently
generated edge direction and weights.
For each ordered pair of vertices $(i, j)$, $i\neq j \in V$, let $B_{ij}= 1$
if there is a directed edge from $i$ to $j$ and $B_{ij} = 0$ otherwise.
The $B_{i}$'s are independent {\em Bernoulli} variables with rate $p$.
The weights of the directed edges are independently drawn from a
distribution with support on any subset of $\mathbb{R}^{+}$.
Because of the independently generated edges and weights,
the assortativity of this network is expected to be~$0$ when the network
order~$n$ is large.

The specific settings our  extended ER model are as follows.
The weight distribution was set to be discrete {\em uniform} distribution
with support $\{1, 2, \ldots, \theta\}$ with parameter $\theta \in \mathbb{N}$.
Let $U_{ij}$ be the weight corresponding to edge $i\to j$ if $B_{ij} = 1$ and
zero otherwise. The in-strength and out-strength of vertex $i$ is governed by a
random variable $S_i^{(\text{out})} := \sum_{k \ne i}^{n} (U_{ki}  B_{ki})$ and
$S_i^{(\text{in})} := \sum_{k \ne i}^{n} (U_{ik}  B_{ik})$, respectively,
which share the same distribution with mean $(n- 1) p \theta/2$.
The model parameters were set to be $p = 0.2$ and $\theta = 10$.
We considered networks of order $ n \in \{10, 20, \ldots, 140, 150\}$.
In each configuration, a total of 2,000 weighted and directed ER networks were
generated, and all all four kinds of assortativity coefficients were computed
for each replicate.

\begin{figure}[tbp]
  \centering
  \includegraphics[width=\textwidth]{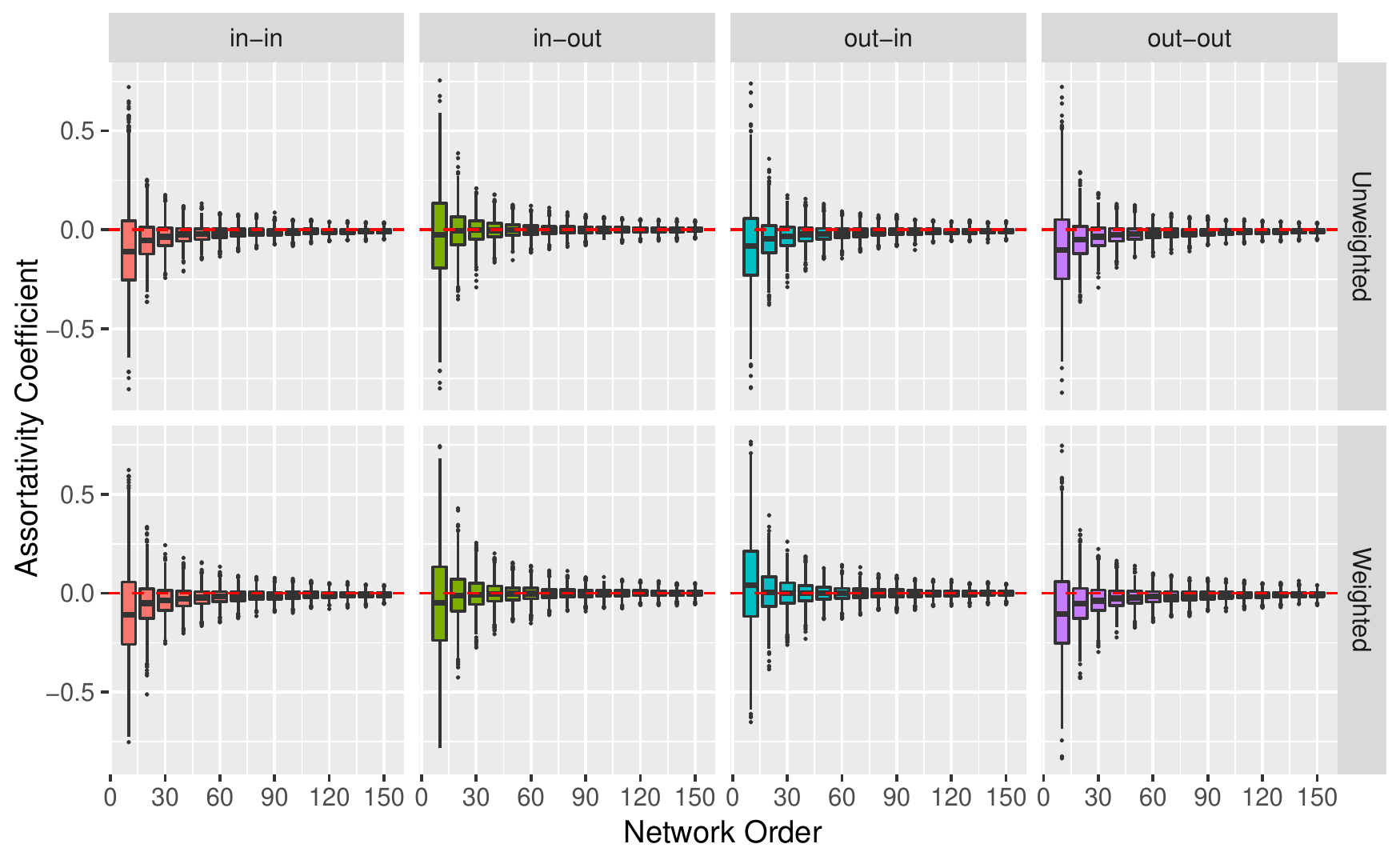}
  \caption{Boxplots of four kinds of weighted and unweighted assortativity
  coefficients for 2,000 weighted and directed ER random networks; the red
  dashed line in each panel is $\rho  = 0$.}
  \label{fig:ER_model}
\end{figure}

Figure~\ref{fig:ER_model} shows the boxplots of the assortativity coefficients
for the 2,000 networks against the network order~$n$. For
comparison, we also included the four kinds of directed but
unweighted assortativity coefficients \citep{Foster2010edge}.
All the assortativity coefficients converge to~$0$
quickly. For a moderate network order $n = 150$, all
the assortativity coefficients are centered around $0$ with small
variation. There is no significant difference between any pair of
weighted and unweighted assortativity measures. No obvious difference
in convergence rate is observed, either.
Since the extended ER random networks are completely
random networks, where the edges are added independently, the fact
of no tendency of vertex connection is well reflected by the assortativity
coefficients.

\subsection{\BA\ Random Networks}
\label{Sec:bamodel}

The Barab{\'a}si--Albert (BA) random network model
\citep{Barabasi1999emergence} defines a generative network
with a {\em preferential attachment} mechanism.
Starting from a seed graph,  at each evolutionary
step, a new vetex is connected with an existing vertex with
probability proportional to its degree. BA networks are scale-free networks
because their degree sequence obeys a {\em power law}
\citep{Bollobas2001thedegree}.
\citet{VanderHofstad2014degree} showed that the
assortativity of an undirected and unweighted BA network is
asymptotically~$0$. The assortativity of weighted and
directed BA networks has not been investigated before.

Our construction of weighted and directed BA networks is extended from
the algorithm of \citet{Wan2017fitting}. The seed
graph consists of two vertices connected by a directed edge. The
edge addition process is controlled by tow parameters
$\alpha, \gamma \in [0, 1]$
subject to $\alpha + \gamma = 1$. With probability~$\alpha$,
an edge is added from a newcomer to a sampled vertex in
the existing network; with probability~$\gamma$, an edge is added
from a sampled vertex in the existing network pointing in the
newcomer. The weight of each edge is drawn independently from a
pre-specified distribution with positive support.
The probability of an existing vertex being sampled at
each step is proportional to its in-strength or
out-strength, instead of its in-degree or out-degree as
in~\citet{Wan2017fitting}; the choice between in-strength and
out-strength depends on the direction of the newly added edge.
Additional tuning parameters $\delta_{\rm out}$ and $\delta_{\rm
in}$ control the growth rate of out-strength and in-strength,
respectively.

We set $\alpha = 0.6$, $\gamma = 0.4$,
and $\delta_{\rm in} = \delta_{\rm out} = 1$,
and drew edge weight from a
discrete uniform distribution over $\{1, 2, \ldots, 10\}$. To investigate
the relation between network order and assortativity, we considered network
order $n \in \{2^4, 2^5, \ldots, 2^{12}\}$. For each setting,
2,000 independent networks were generated.
Figure~\ref{fig:PA_alpha0.6} shows the boxplots of
weighted and unweighted out-in assortativity coefficients;
the plots for out-out, in-in, and in-out
assortativity coefficients are similar and, hence, omitted. Both
weighted and unweighted assortativity
coefficients are negative, with the weighted version closer to zero.
The negative assortativity is expected since new vertices (of out-
or in-strength~$1$) tend to be
connected with the vertices of large out-strength or in-strength
vertices in the existing network. When network size becomes larger, both
assortativity values get closer to zero with small variations.

\begin{figure}[tbp]
	\centering
	\includegraphics[width=\textwidth]{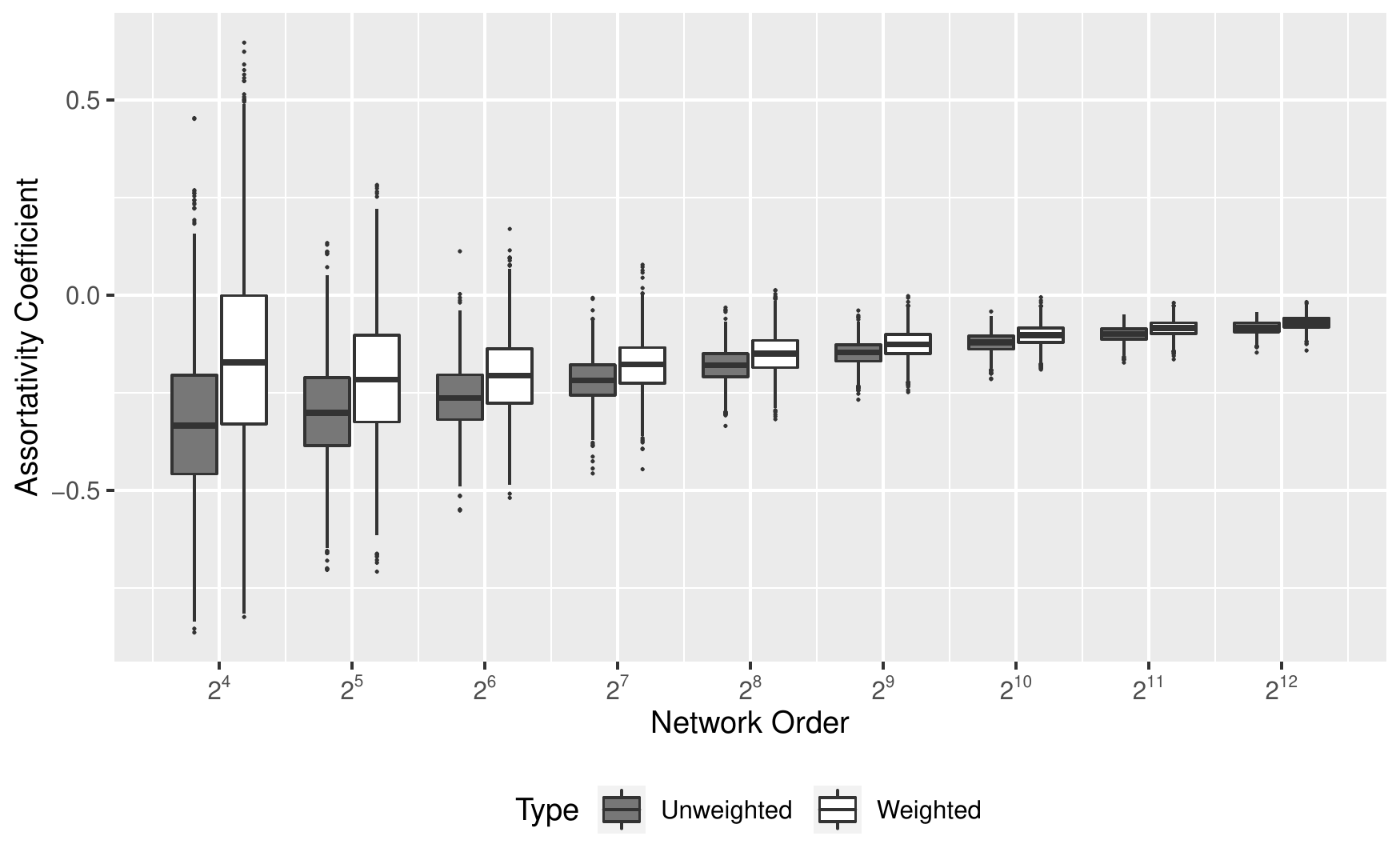}
	\caption{Side-by-side boxplots of weighted and unweighted out-in
	assortativity coefficients for 2,000 BA random networks of
	different size.}
	\label{fig:PA_alpha0.6}
\end{figure}

Edge weight may have significant impact on network assortativity. To
illustrate, we considered adding an edge with a substantially large
weight as network grows. Under the same setting of
$\alpha, \gamma, \delta_{\rm in}$ and $\delta_{\rm out}$, we
generated BA networks with $500$ steps (i.e., equivalently $502$ 
vertices). There is one edge with weight $1,000$, while all of the
others have relatively small weights, drawn uniformly from $\{1, 2,
\ldots, 10\}$. The arrival time of the large-weight edge was set
at
$t \in \left\{10, 20, \ldots, 500 \right\}$.
For each $t$, a total of 2,000 networks were
generated, and the weighted and unweighted out-in assortativity
coefficients were calculated.

\begin{figure}[tbp]
	\centering
	\includegraphics[width=\textwidth]{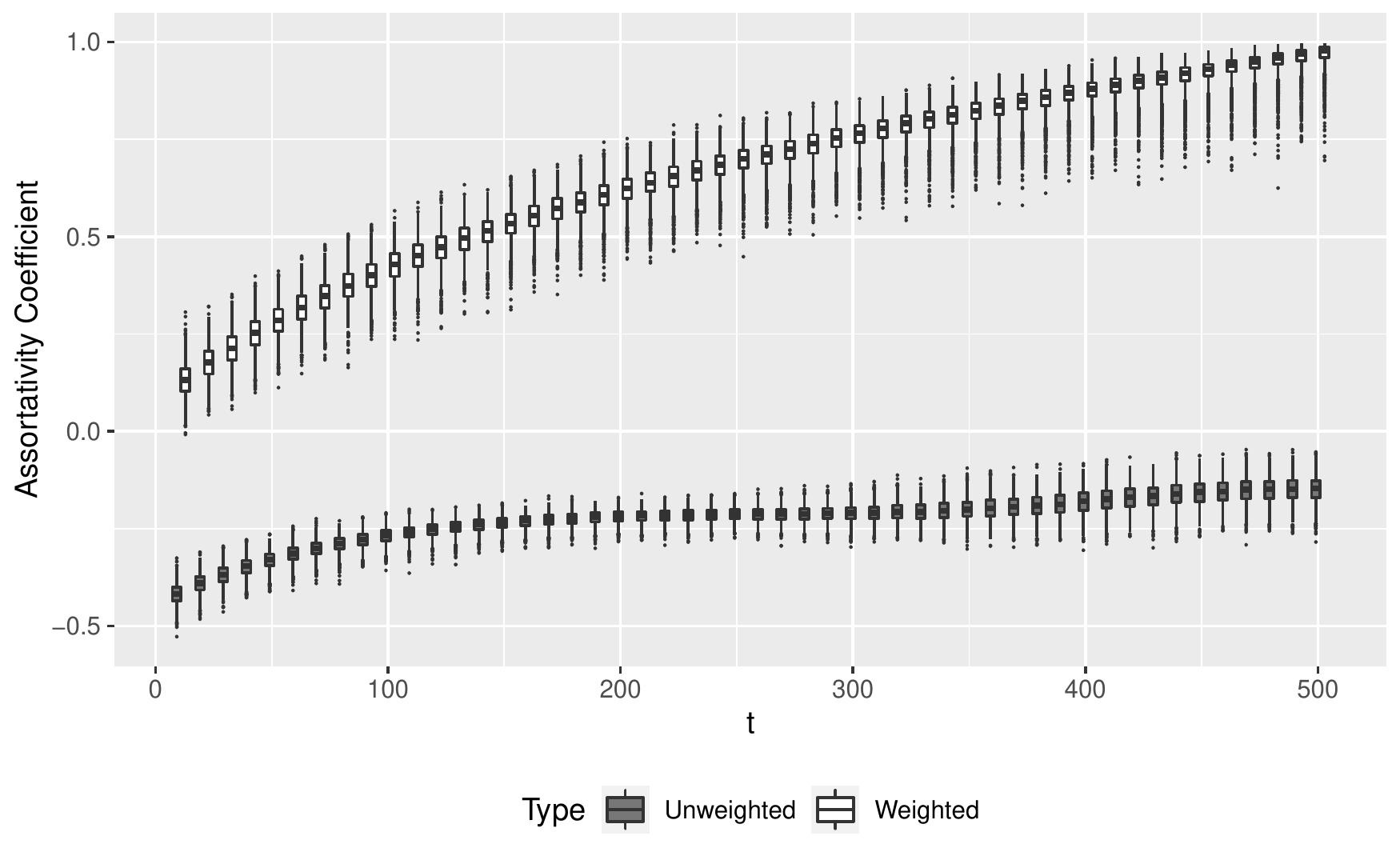}
  \caption{Side-by-side boxplots of weighted and unweighted out-in
  assortativity coefficients of BA random networks (fixed size 500),
  with one large-weight edge joining in at different timestamps.}
	\label{fig:PA_loc_alpha0.6}
\end{figure}

Figure~\ref{fig:PA_loc_alpha0.6} shows the boxplots of the weighted and
unweighted out-in assortativity coefficients against the arrivale time of the
big-weight edge.
Unlike in Figure~\ref{fig:PA_alpha0.6}, here the weighted and unweighted
assortativity coefficients appear to be drastically different.
The weighted assortativity coefficients are positive, but the unweighted
counterparts stay negative. The later the large-weight edge
joins the network, the higher the weighted out-in assortativity
coefficient is. Especially when the large-weight edge is the last one,
it ensures that the edge links a
vertex of large out-strength to a vertex of large in-strength.
Since the weights of all preceding edges are relatively
small, the contributions to the weighted out-in assortativity
is dominated by the last large-weight edge, and thus the
assortativity reaches a high value. When the large-weight edge is
appended to the network at an early time, the vertices at the two
ends at its first appearance have high probabilities to attract the
subsequent newcomers; the weighted assortativity coefficient
remains positive owing to the impact of the large
weight, but its effect is tapered off over time.
The unweighted assortativity coefficients remain negative and show
no response to arrival time of the big-weight edge, which does not
reflect the expected changes.

\subsection{Stochastic Block Models}
\label{Sec:sbm}

The stochastic block model \citep[SBM,][]{Holland1983stochastic}
is a class of network models presenting community structures. The
vertices in a network based on SBM are partitioned into disjoint
communities such that the number of edges among the vertices within
a community is expected to be significantly higher than that among
the vertices from different communities. One of the classical edge
addition algorithms for generating an SBM is a two-step procedure.
Within each community, generate an ER random graph with a relatively
large value of $p$; between communities, generate an edge with
probability $p^{\prime} < p$ between each pair of between-community
vertices. The set of $p$ and $p^{\prime}$ may vary from
community to community.

We considered SBM networks consisting of two
communities of equal size. The link density for each community was
fixed at $p = 0.2$, and the edge weights within each community
were sampled uniformly from $(0, 5)$ and $(5, 10)$, respectively.
The between-community link density was set to be
$p^{\prime} = 0.02$, independent of within-community edges. The
between-community edge weights were set to be identically~$5$.
The edge directions were assigned randomly with equal probability and
independently, for both within- and between-community edges. We
allowed the community size to vary in $\{50, 100, \ldots, 500\}$.
For each setting, we generated 2,000 SBM networks and calculated
their weighted and unweighted out-in assortativity coefficients.

\begin{figure}[tbp]
	\centering
	\includegraphics[width=\textwidth]{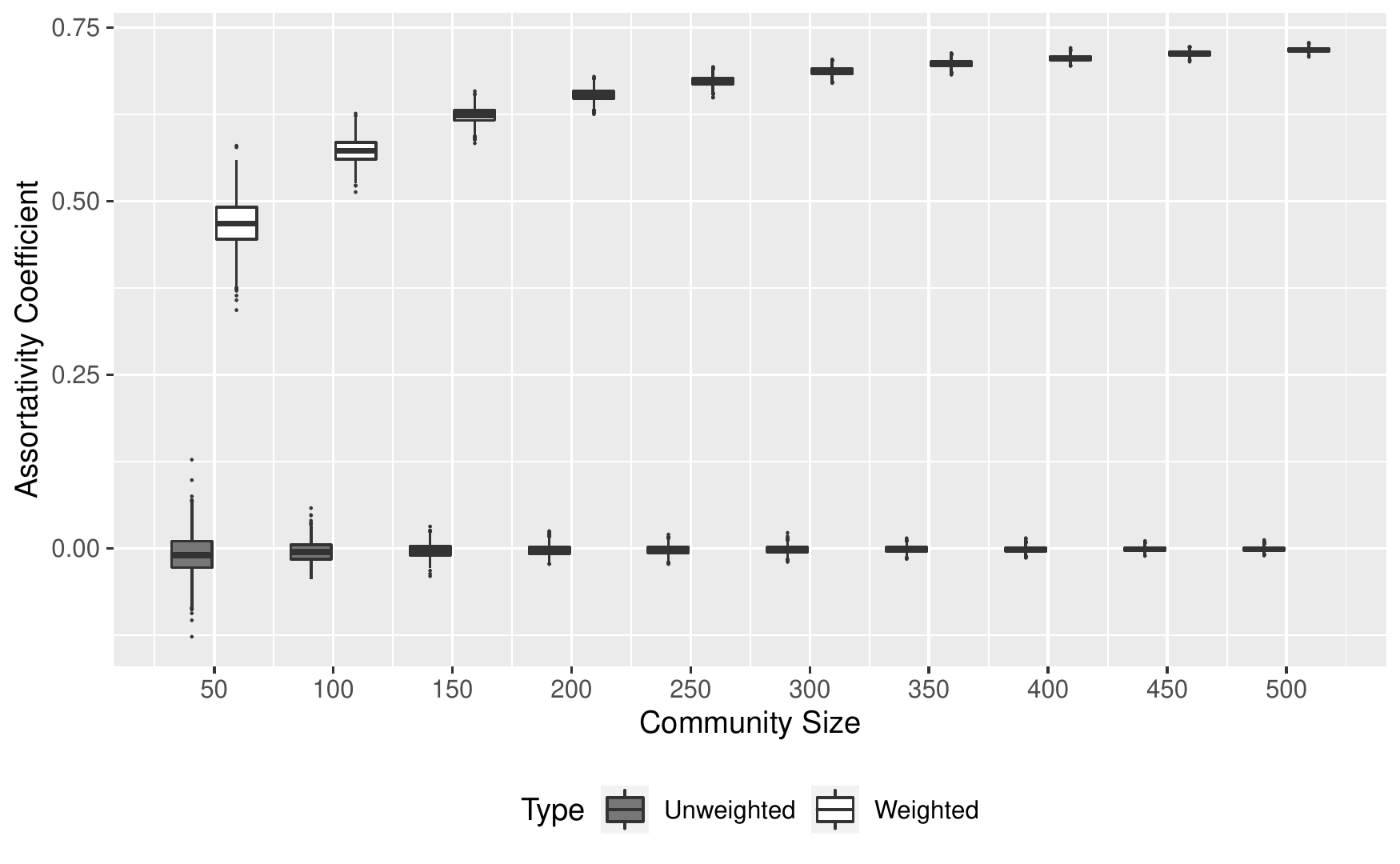}
	\caption{Side-by-side boxplots of weighted and unweighted
	out-in assortativity coefficients based on 2,000 SMB networks
	with different community sizes.}
	\label{fig:SBM1}
\end{figure}

Figure~\ref{fig:SBM1} shows the side-by-side boxplots of the
weighted and unweighted out-in assortativity coefficients
against the community size. The unweighted assortativity
coefficients are centered around~$0$ with
small variations regardless of the change in community size (and
network size). That is expected as the edges are added independently,
and the edge weights are ignored in the computations of unweighted
assortativity coefficients. Each of the simulated networks is
structurally equivalent to a composition of two (mutually
independent) ER random graphs and
an independent Bernoulli model, leading to~$0$ assortativity.
When the edge weights are accounted, however,
the assortativity coefficients are obviously positive with
higher values for the networks of larger size. The
within-community edges
always link the vertices with large out-strength to those with large
in-strength (or the vertices with small out-strength to those with
small in-strength). The between-community link density is relatively
small, and the between-community edges are likely to link two
isolated vertices when the communities are large in size.
Thus, the weighted out-in assortativity
coefficients tend to be positive, especially for large networks.

To investigate the impact of between-community edge
weight and link density on network assortativity, we conducted a
a sensitivity analysis. The changes to the settings are:
the community size was fixed at~$500$;
the between-community link density parameter was set to be
$p' \in \{0.02, 0.03, \dots, 0.1\}$; and
the between-community edge weights were set to be identically
$5k$ with $k \in \{1, 1/2, 1/4, 1/8\}$. For each configuration,
$2,000$ SBMs were generated.  The boxplots of the weighted
out-in assortativity coefficients are summarized in
Figure~\ref{fig:SBM1_2}. As expected, the
assortativity coefficient decreases as $p^{\prime}$ increases for each
given $k$; the decreasing is faster for smaller $k$.
For each given $p^{\prime}$, the assortativity coefficient increases
as $k$ decreases, and the increase is faster for higher $p'$.
The results support that both between-community
edge weight and link density have impact on assortativity values.

\begin{figure}[tbp]
	\centering
	\includegraphics[width=\textwidth]{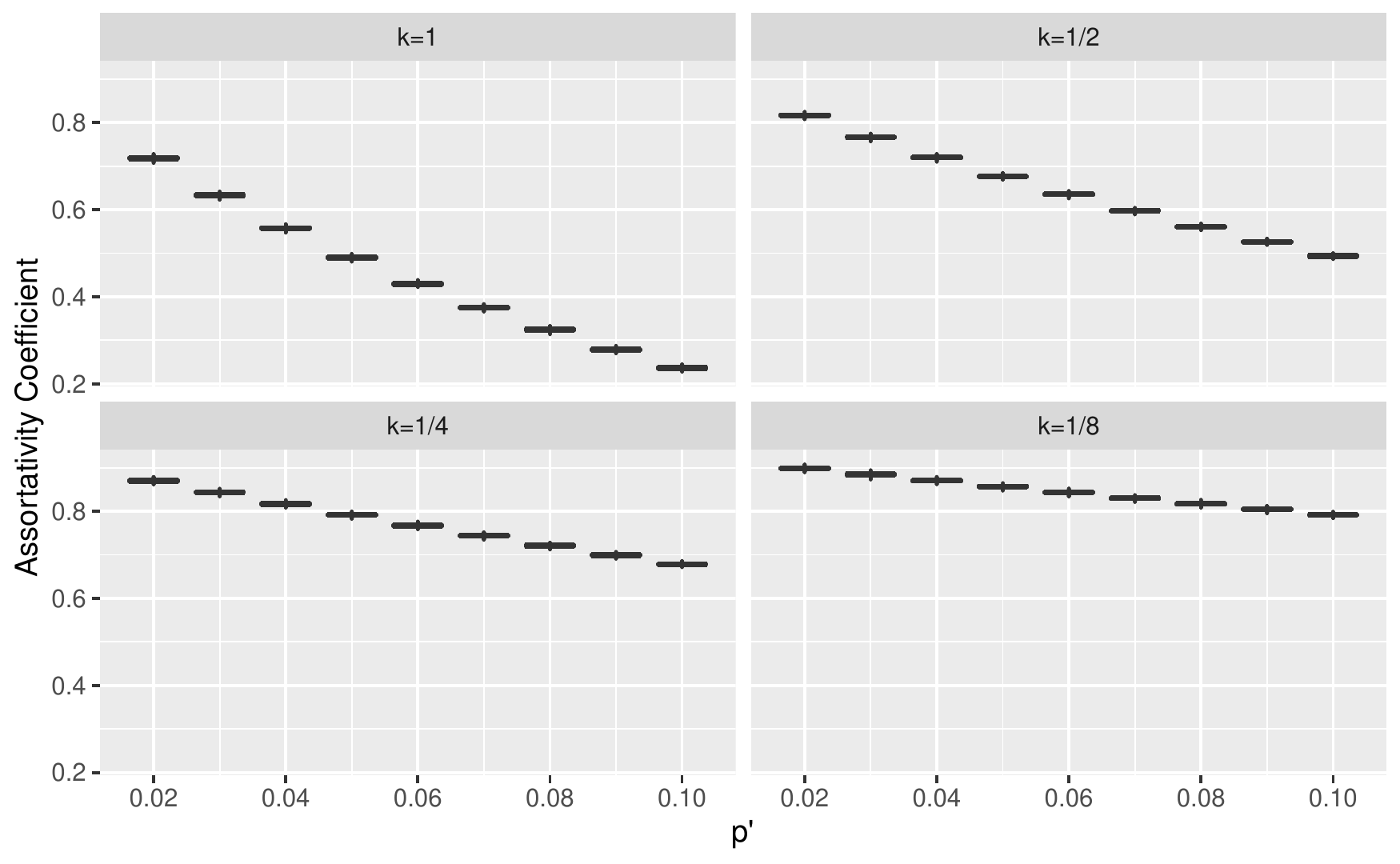}
	\caption{Weighted out-in assortativity
		coefficients of SMB networks by choosing different values of
		ratio $k$ and $p'$.}
	\label{fig:SBM1_2}
\end{figure}

\subsection{Rewired Networks with Given Assortativity}
\label{Sec:targetsim}

A weighted (undirected) network can be rewired so that its assortativity
achieves a predetermined level by extending the algorithm of
\citet{Newman2003mixing}. The details of our rewiring algorithm are 
given in Appendix~\ref{Append:rewire}.
Let $\xi$ be the target assortativity coefficient of a weighted network
of~$n$ vertices. Let $S$ denote a random variable of admissible vertex
strength with support $\mathcal{S}$, where $|\mathcal{S}|< \infty$.
We generated networks with $n=2,000$ vertices whose
strengths are distributed such that
$p_k = \Pr(S = z_k) \propto z_k^{-2.5}\exp{-z_k / 100}$
for $z_k \in \mathcal{S} = \left\{ 10, 11, \dots, 99, 100\right\}$.
The number of rewiring steps was fixed at $5000 \times n$. For each
$\xi \in \left\{0.1, 0.2, 0.3, \dots, 0.8, 0.9\right\}$, 100 networks were
simulated.

\begin{figure}[tbp]
	\centering
	\includegraphics[width=\textwidth]{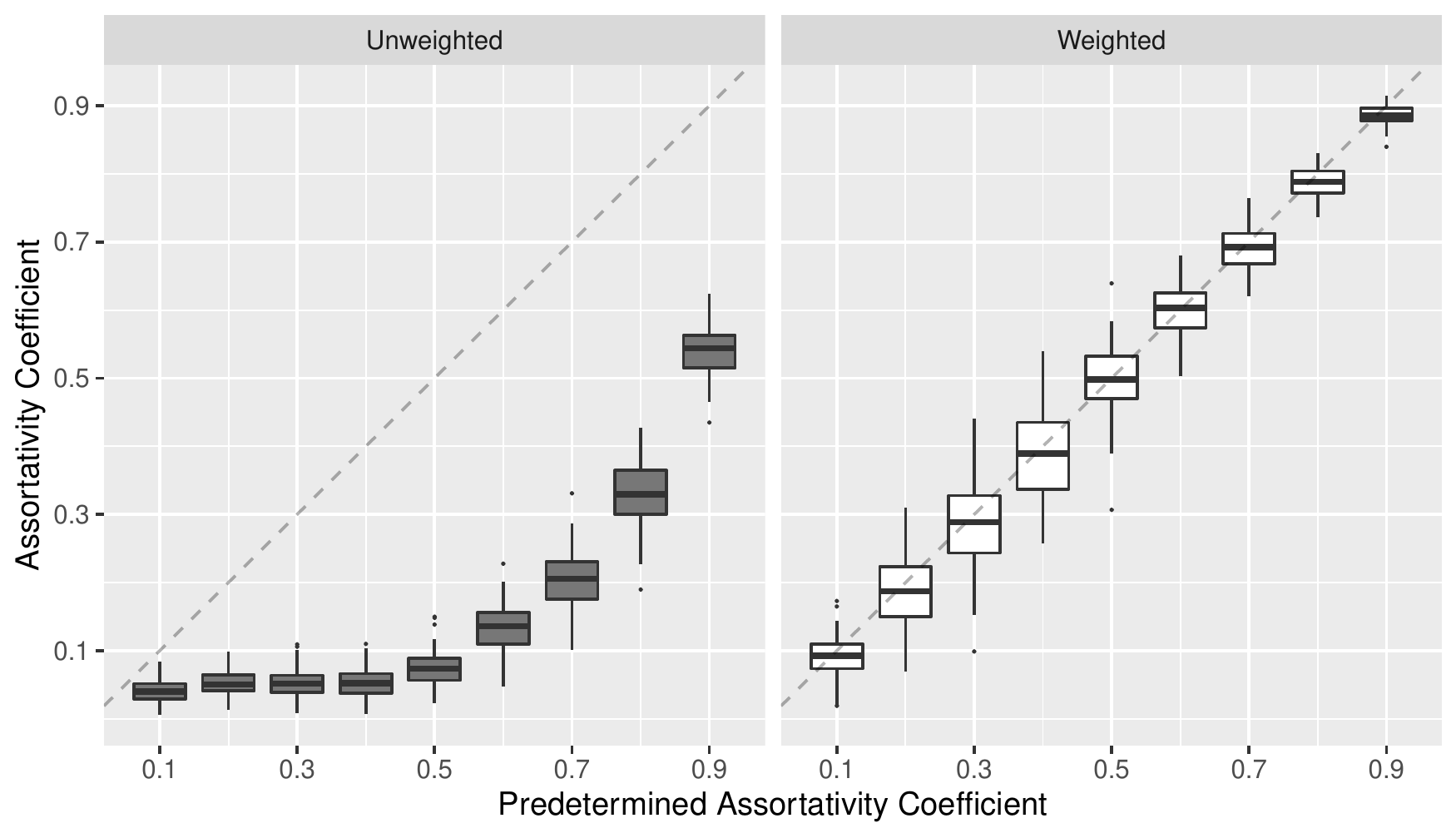}
	\caption{Boxplots of the weighted and unweighted assortativity
	coefficients of	the simulated networks with respect to
	different predetermined values.}
	\label{fig:given_rho2}
\end{figure}

Figure~\ref{fig:given_rho2} shows the boxplots of both weighted and
unweighted assortativity coefficients of the 100 simulated networks
against~$\xi$. The weighted assortativity coefficients (right panel) are
approximately centered on the $45$-degree line, suggesting that the
average of the weighted assortativity coefficients are consistent
with the predetermined values. The boxplots of the
unweighted assortativity coefficients (left panel), however, stay far
away from the $45$-degree line, implying that the unweighted
assortativity coefficient fails to capture the assortative feature of the
weighted network.

\section{Application to WIONs}
\label{Sec:applications}

We apply the proposed measures to WIONs constructed with the 
2016 release of the World Input-Output database
\citep{Timmer2015anillustrated}. There have been a few studies of the
WIONs \citep{Cerina2015world, delRio2017trends, 
Piccardi2017random} with data based on the 2013 release.
The 2016 release is the most recent, covering the period of 2001--2014. For
each year, the WIOT
records the directed economic transactions among 56~sectors of 44~countries,
regions, or territories, the last one of which is the rest of the world (RoW).
Therefore, the resulting WIONs have order 2,464. The
edge weights are in the unit of million US dollars (USD). For the 
purpose of temporal comparison, an adjustment for inflation has been 
applied to the data using the GDP deflators from the World Bank 
(\url{https://data.worldbank.org/indicator/NY.GDP.DEFL.ZS}).
The WIONs are extremely dense (e.g., the link density of the WION of 2014 is
$0.83$), but contain a large amount of edges with
small weights (e.g., the $90$-th percentile of the edge weights in
the WION of 2014 is $0.81$ millions USD). The large amount of edges with small
weights tend to blur the fundamental structure of the network that are of primary
interest. One way to extract the fundamental structure is to use the
{\em backbone} of the WION after discarding noisy edges up to a certain level.
We adopt the filtering procedure introduced by~\citet{Xu2019input}; see details
in Appendix~\ref{Append:backbone}.

\begin{figure}[tbp]
	\centering
	\includegraphics[width=\textwidth]{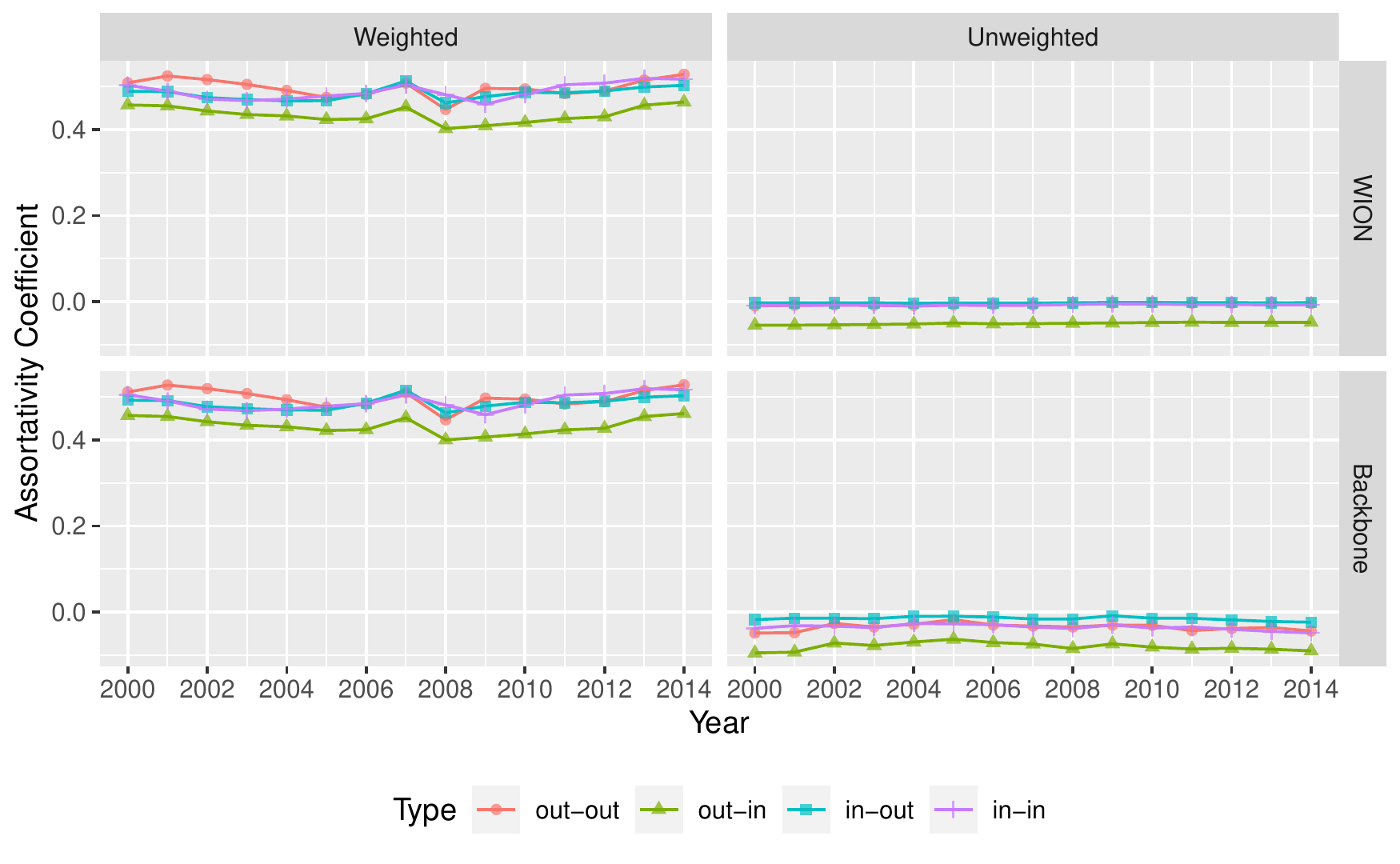}
  \caption{Weighted and unweighted assortativity 
  coefficients of WIONs and their corresponding backbones at level 
  $0.05$ from 2000 to 2014.}
	\label{fig:backbone}
\end{figure}

Figure~\ref{fig:backbone} presents the four kinds of weighted assortativity
coefficients from the whole WIONs and their backbones of level  $0.05$ from
2000 to 2014. For comparison, with unweighted assortativity
coefficients~\citep{Barrat2004thearchitecture} that were used
by~\citet{Cerina2015world} are also included. In the upper-left panel, all
four weighted assortativity measures for the whole WIONs are positive in the
range from $0.40$ to $0.53$, suggesting that the vertices of similar strength
levels are more likely to be connected. For
instance, high-instrength region-sectors like ``construction'' 
usually 
take large inputs from high-outstrength region-sectors like 
manufactures of ``mineral products'', ``basic metals'' and ``wood 
and cork''. Another example is given by ``manufactures of
computer, electronic and optical products'' supplying large outputs 
to some relevant region-sectors such as ``computer programming, 
consultancy and related activities''.
WIONs have revealed a geographical feature that large amount of 
monetary transactions usually occur among the region-sectors in the 
same country, while fewer or substantially smaller transactions are 
observed among the region-sectors across different countries. This 
geographical feature helps explain the positive values of the 
weighted assortativity measures for WIONs. The temporal changes in 
all 
assortativity coefficients 
appear to be quite
similar. Each curve presents a notable from 2007 to 2008 when the global
financial crisis occurred. After 2009, a consistently upward
trend emerges in each curve, suggesting recovery of the global economy.
The counterparts for the backbones show almost the same magnitude and patterns.
The backbones of level 0.05 had only about 5\% of the edges preserved but
account for over
96\% of the total weights. This shows the robustness of the proposed
assortativity coefficients in capturing the fundamental structure of WIONs.
In contrast, the unweighted assortativity coefficients are negative (close to
$0$ in magnitude) with little temporal changes over the years, similar to
those reported by~\citet{Cerina2015world}. The dramatic difference between
the weighted and unweighted results suggests that ignoring link weights in
any kind of assortativity measure may lead to misleading conclusions.

\begin{figure}[tbp]
	\centering
	\includegraphics[width=\textwidth]{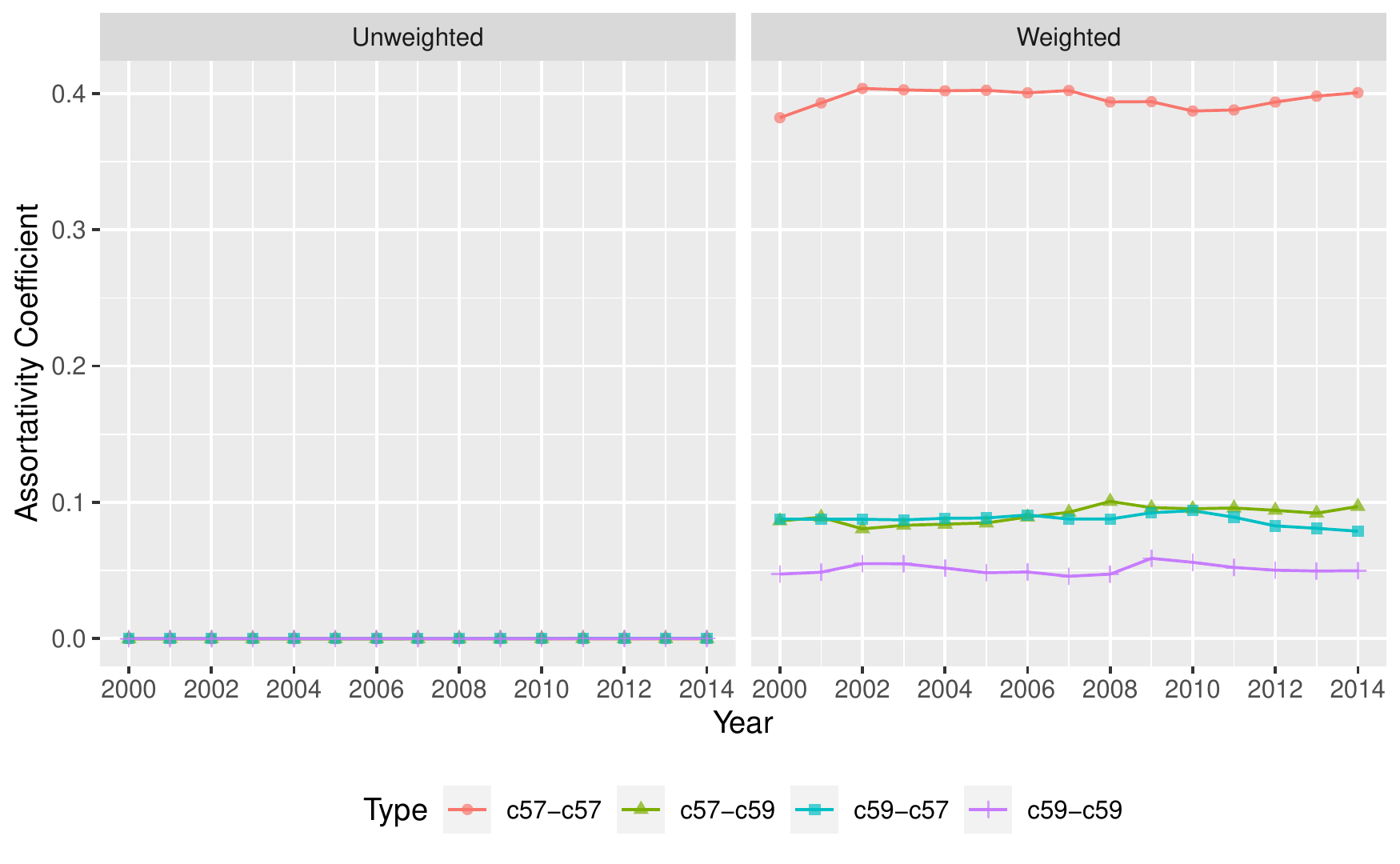}
	\caption{Assortativity coefficients between the final 
	consumption expenditure by households ($c57$ denoted by $X$) and 
	by government service ($c59$ denoted by $Y$) of WIONs from 2000 
	to 
	2014.}
	\label{fig:feature}
\end{figure}

The WIONs provide a good opportunity to illustrate the applicability of the
proposed assortativity measures to general vertex-specific features 
beyond
degree and strength. Consider two vertex features, the final consumption
expenditure by households ($c57$ denoted by $X$) and by government 
service ($c59$ denoted by $Y$).
These features are in the WIOTs but not used in constructing the WIONs.
Let $\rho_{\alpha, \beta}$, $\alpha, \beta\in \{X, Y\}$, be the assortativity
coefficient between feature $\alpha$ of the source vertex and 
feature $\beta$ of the
target vertex.  Figure~\ref{fig:feature}
shows the four assortativity coefficients (of both weighted and 
unweighted 
versions) for the periods of 2001--2014.
All the weighted assortativity coefficients are positive. The level of
$\rho_{X, X}$ is about 0.4, notably higher than the rest (0.1 or lower).
This implies that the region-sectors with higher final consumption by
households are likely transact more among themselves.
The curve of $\rho_{X, X}$  fluctuates within a relatively small region
over the 15 years. It has a noticeable increasing trend during 2000--2002,
followed by stable period before a downward trend during 2008--2010, and
rises steadily afterwards. The other three coefficients are closer to zero,
suggesting no strong tendency of assortative connections. Of interest is that
$\rho_{X, Y}$ and $\rho_{Y, X}$ stay together at about the same level over all
the years, despite that they measure transactions of opposite 
directions. If edge weight is not accounted when analyzing the 
correlation of the features independent of
network structure, all kinds of the assortativity measures are 
centered at $0$ with small variations (see the left panel in 
Figure~\ref{fig:feature}), and hence become 
non-informative and useless.

\section{Discussions}
\label{Sec:conc}

The proposed class of assortativity measures meet the practical need from
analyses of weighted and directed network. Special cases of the proposed
measures are equivalent to several classical assortativity measures
in the literature. The generalized measures can be
used to study the assortative tendency of any pair of
vertex-specific features.  Extensive simulations and two real
data analyses demonstrate that both edge direction and weight
should be accounted in the computation. Edge direction is an 
essential property, as a directed network may 
simultaneously present assortative and disassortative mixing, i.e., 
one kind of assortativity coefficient (e.g., out-in) may have a 
positive value, while another (e.g., in-out) has a negative value; 
see~\citet{Piraveenan2012assortative} for an example of food webs in 
the United States.
Edge weight characterizes weighted networks; discarded it leads to meaningless
results or incorrect conclusions.

The proposed assortativity measures, like Newman's original measure,
are still based on the concept of Pearson correlation coefficient.
Newman's measure converges to~$0$ when network size is sufficiently
large \citep{Litvak2013uncovering}, a drawback shared by the proposed
measures. In addition, our simulation examples in Section~\ref{Sec:bamodel}
suggest that the proposed measures
may not successfully characterize the assortativity of a network if
the edge weights all have similar levels. One possible remedy is to extend
assortativity measure based on Spearman's correlation
\citep{Litvak2013uncovering}. Along the same line, a few other
nonparametric correlation coefficients in statistics can similarly
be considered, such as Kendall's tau and Blomqvist's beta.
We will continue our research in this direction, and report our
outcomes elsewhere in the future.

%\bibliographystyle{asa}
%\bibliography{assort_novel}

\appendix

\section{An Illustrative Example}
\label{Append:example}

For example, let us consider a weighted and directed network as 
shown in~Figure~\ref{fig:toy}.
The weights are marked next to the directed edges; for example,
$w_{A B} = 10$, $\ins_A = 3$, $\outs_A = 13$, $\ins_B = 16$ and
$\outs_B = 9$. Additionally, we have
$\bar{s}_{\rm sou}^{\left(\rm out \right)} = 9.39$,
$\bar{s}_{\rm sou}^{\left(\rm in \right)} = 5.90$,
$\bar{s}_{\rm tar}^{\left(\rm out \right)} = 5.90$,
$\bar{s}_{\rm tar}^{\left(\rm in \right)} = 10.16$,
$\sigma_{\rm sou}^{\left(\rm out \right)} = 3.68$,
$\sigma_{\rm sou}^{\left(\rm in \right)} = 6.58$,
$\sigma_{\rm tar}^{\left(\rm out \right)} = 4.83$, and
$\sigma_{\rm tar}^{\left(\rm in \right)} = 6.06$.
For the given network, we have the weighted assortativity coefficients
$\rho_{\rm in, in} = -0.56$, $\rho_{\rm in, out} = -0.82$,
$\rho_{\rm out, in} = 0.29$ and $\rho_{\rm out, out} = -0.29$. This 
example network simultaneously presents assorative and 
disassortative 
mixing. However, when edge weights are not accounted in the 
computation, the corresponding four kinds of 
assortativity coefficients are all equal to $-0.75$, suggesting 
strong 
disassortative mixing.

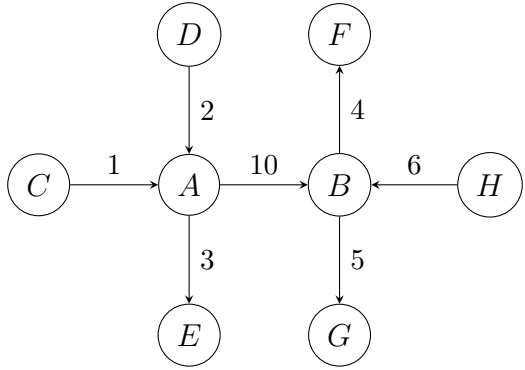
\begin{figure}[tbp]
  \centering
	\begin{tikzpicture}[auto,arrow/.style = {-stealth}]
	\matrix [matrix of math nodes,
	column sep={2cm,between origins},
	row sep={2cm,between origins},
	nodes={circle, draw, minimum size = 7.5mm}]
	{
		& |(D)| D & |(F)| F &        \\
		|(C)| C & |(A)| A & |(B)| B & |(H)| H\\
		& |(E)| E & |(G)| G &        \\
	};
	\begin{scope}[every node/.style = {font = \small\itshape}]
	\draw[arrow] (C) -- node [midway] {\rm 1} (A);
	\draw[arrow] (D) -- node [midway] {\rm 2} (A);
	\draw[arrow] (A) -- node [midway] {\rm 3} (E);
	\draw[arrow] (B) -- node [midway, right] {\rm 4} (F);
	\draw[arrow] (A) -- node [midway] {\rm 10} (B);
	\draw[arrow] (B) -- node [midway] {\rm 5} (G);
	\draw[arrow] (H) -- node [midway, above] {\rm 6} (B);
	\end{scope}
	\end{tikzpicture}
  \caption{An illustrative example of weighted directed network.}
  \label{fig:toy}
\end{figure}

\section{Variants of the Proposed Assortativity} % for Different Types of Networks}
\label{Append:math}

\begin{prop}
	\label{Prop:weightundirect}
	The proposed assortativity coefficient is equivalent to that
	given by \citet{Arcagni2019extending} (case 4) when a network is
	weighted but undirected.
\end{prop}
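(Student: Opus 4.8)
The plan is to specialize the directed formula~\eqref{eq:fourd} to the undirected setting and verify that it collapses, term by term, to the case~4 expression of \citet{Arcagni2019extending}. When $G$ is undirected and weighted, its weight matrix is symmetric, $w_{ij}=w_{ji}$, so each vertex carries a single strength. I would first observe that $\ins_i=\outs_i$ for every $i$, whence the in/out distinction dissolves and $s_i^{(\alpha)}=s_i^{(\beta)}=s_i$ regardless of the choice of $\alpha,\beta\in\{\mathrm{in},\mathrm{out}\}$.

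First I would exploit symmetry to collapse the source and target moments. Because $w_{ij}=w_{ji}$, relabeling the summation indices $i\leftrightarrow j$ in the defining sums shows that the weighted source mean equals the weighted target mean,
\begin{equation*}
\bar{s}_{\rm sou}=\frac{\sum_{i,j\in V}w_{ij}\,s_i}{W}=\frac{\sum_{i,j\in V}w_{ji}\,s_i}{W}=\frac{\sum_{i,j\in V}w_{ij}\,s_j}{W}=\bar{s}_{\rm tar},
\end{equation*}
and, applying the same relabeling to the squared deviations, $\sigma_{\rm sou}=\sigma_{\rm tar}$. Writing $\bar{s}$ for the common mean and $\sigma$ for the common standard deviation and substituting into~\eqref{eq:fourd} gives
\begin{equation*}
\rho_{\alpha,\beta}(G)=\frac{\sum_{i,j\in V}w_{ij}\,(s_i-\bar{s})(s_j-\bar{s})}{W\sigma^2},
\end{equation*}
a single weight-normalized Pearson correlation of endpoint strengths with no residual dependence on $\alpha$ or $\beta$.

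Next I would write out the case~4 coefficient of \citet{Arcagni2019extending} in the same notation and match it against the display above. Since case~4 is their strength-based extension of the \citet{Leung2007weighted} measure for weighted undirected networks, it is likewise a weight-normalized Pearson correlation of endpoint strengths; the remaining task is algebraic bookkeeping to confirm that the numerators, the normalizing total $W$, and the two standard-deviation factors coincide once both expressions are placed over a common denominator.

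The hard part will be reconciling the normalization conventions between the two frameworks. An undirected edge of weight $w$ is represented in the directed formalism of~\eqref{eq:fourd} as the two ordered pairs $(i,j)$ and $(j,i)$, so quantities such as $W$ and the numerator sum may differ from the single-count convention of \citet{Arcagni2019extending} by a factor of two. I would track these factors carefully and verify that the doubling in the numerator cancels against the doubling of $W$ in the denominator, while $\sigma^2$ itself is unaffected because its own numerator and its factor $W$ are doubled in tandem. Confirming this exact cancellation, rather than mere agreement up to a multiplicative constant, is the crux of the argument; everything else reduces to the symmetry and relabeling already used above.
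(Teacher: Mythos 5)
Your proposal is correct and follows essentially the same route as the paper's proof: both embed the undirected network into the directed formalism (the paper splits each edge into two half-weight directed edges so that $\tildeins_i = \tildeouts_i = s_i/2$, while you keep a full-weight symmetric matrix — equivalent choices, since the correlation in Equation~\eqref{eq:fourd} is invariant to uniform rescaling of weights and strengths), observe that the four $(\alpha,\beta)$ types coincide and that source and target moments agree, and reduce \eqref{eq:fourd} to the weight-normalized Pearson correlation $\sum_{i,j\in V} w_{ij}(s_i-\bar{s})(s_j-\bar{s})/(W\sigma^2)$. The final identification of this reduced form with case~4 of \citet{Arcagni2019extending}, which you flag as remaining bookkeeping, is treated as immediate in the paper as well, so your argument is at the same level of completeness as the published one.
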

\begin{proof}
	For weighted but
	undirected networks, each weighted edge can be evenly split to
	two
	weighted and directed edges respectively pointing into the two
	vertices at the ends. In what follows, the strength of a node
	$i$,
	$s_i$, is equal to the sum of the manually created in-strength
	$\tildeins_i$ and out-strength $\tildeouts_i$, i.e.,
	$\tildeins_i
	= \tildeouts_i = s_i / 2$. The new arrangement causes a change
	in the weighted adjacency matrix; namely $\tilde{\wei} =
	\wei/2$ and consequently $\tilde{W} = \sum_{i,j \in V}
	\tilde{w}_{ij} = W/2$. As network direction is not accounted,
	the four kinds of assortativity measures are equivalent. Without
	loss of generality, we continue the verification by considering
	the ``out-out'' type. Elementary algebra leads us to
	\begin{equation*}
		\bar{s}^{\, \rm (out)}_{\rm sou} = \bar{s}^{\, \rm
			(out)}_{\rm tar} = \frac{\sum_{i,j \in V} (w_{ij} / 2)
			(s_i / 2)}{W/2} = \frac{\sum_{i,j \in V} w_{ij} s_i}{2W}
			= \frac{\mu}{2},
	\end{equation*}
	where $\mu$ is the weighted average of $s_i$. The standard
	deviation is given by
	\begin{equation*}
		\sigma^{\rm (out)}_{\rm sou} = \sigma^{\rm (out)}_{\rm tar}
		= \sqrt{\frac{\sum_{i,j} (w_{ij}/2)(s_i / 2 - \mu /
		2)^2}{W/2}} = \frac{1}{2} \cdot \sqrt{\frac{\sum_{i,j \in V}
		w_{ij} (s_i - \mu)^2}{W}} = \frac{\sigma}{2},
	\end{equation*}
	where $\sigma$ is the weighted standard deviation. In what
	follows, Equation~\eqref{eq:fourd} is reduced to
	\begin{equation*}
		\rho = \frac{\sum_{i,j \in V} (w_{ij}/2) \bigl[(s_i / 2 -
		\mu/2)(s_j/2 - \mu/2)\bigr]}{(W/2)(\sigma/2)(\sigma/2)} =
	\frac{\sum_{i,j \in V} w_{ij} \bigl[(s_i - \mu)(s_j -
	\mu)\bigr]}{W \sigma^2}
	\end{equation*},
which completes the proof.
\end{proof}

When a network $G$ is directed but unweighted, its adjacency matrix
$\wei = \adj$ is dichotomous, and the strength, in-strength and
out-strength of a vertex is equal to its degree, in-degree and
out-degree, respectively. Thus, Equation~\eqref{eq:fourd} is reduced
to
\begin{equation}
	\label{eq:fourud}
	\rho_{\alpha, \beta}(G) =
	\frac{
		\sum_{i, j \in V} a_{ij}
		\left[\left(d_i^{(\alpha)} - \bar{d}_{\rm sou}^{\, (\alpha)}
		\right)
		\left(d_j^{(\beta)} - \bar{d}_{\rm tar}^{\, (\beta)}
		\right)\right]
	}
	{A \sigma_{\rm sou}^{(\alpha)} \sigma_{\rm tar}^{(\beta)}},
\end{equation}
where $A:=\sum_{i,j \in V} a_{ij} = |E|$ counts the number of edges,
and the statistics $\bar{d}_{\rm sou}^{\, (\alpha)}$, $\bar{d}_{\rm
tar}^{\, (\alpha)}$, $\sigma_{\rm sou}^{(\alpha)}$ and $\sigma_{\rm
tar}^{(\beta)}$ are defined analogously. This is equivalent to
\citet[Equation~(1)]{Foster2010edge}. In addition,
Equation~\eqref{eq:fourud} is identical to
\citet[Equation~(25)]{Newman2003mixing} for
$(\alpha, \beta) = (\rm in, out)$, and respectively identical to
\citet[Equation~(21)]{Piraveenan2012assortative} and
\citet[Equation~(22)]{Piraveenan2012assortative} for
$(\alpha, \beta) = (\rm out, out)$ and
$(\alpha, \beta) = (\rm in, in)$ by using the
notations therein.

When a network $G$ is unweighted and undirected, we further simplify
Equation~\eqref{eq:fourd} to
\begin{equation*}
	\rho = \frac{\sum_{i,j \in V} a_{ij} \bigl[(d_i - \mu)(d_j -
	\mu)\bigr]}{A\sigma^2},
\end{equation*}
where $\mu = \sum_{i, j \in V} a_{ij} d_i/A$ and $\sigma =
\sqrt{\sum_{i,j \in V} a_{ij} (d_i - \mu)^2/(4A)}$. Elementary
algebra shows that the above expression is mathematically equivalent
to \citet[Equation~(4)]{Newman2002assortative}.

\section{Boundary Values of the Proposed Assortativity}
\label{Append:boundary}

Consider a network $G$ consisting of $n \ge 3$ vertices with
weighted adjacency matrix given by
\begin{equation*}
	\begin{pmatrix}
		0 & u_1 & 0 & 0 & \cdots & 0
		\\ 0 & 0 & u_2 & 0 & \cdots & 0
		\\ u_3 & 0 & 0 & 0 & \cdots & 0
		\\ \vdots & \vdots & \vdots & \vdots & \ddots & \vdots
		\\ 0 & 0 & 0 & 0 & \cdots & 0
	\end{pmatrix},
\end{equation*}
for some $u_1, u_2, u_3 > 0$. Elementary algebra shows that
$\rho_{{\rm in, out}}(G) = \rho_{{\rm out, in}}(G) = 1$, suggesting
that high out-strength vertices are all connected with high
in-strength ones.

Consider $G^{\ast}$ structurally similar to $G$ but with a different
weighted adjacency matrix given by
\begin{equation*}
	\begin{pmatrix}
		0 & u_1 & 0 & 0 & \cdots & 0
		\\ u_2 & 0 & u_3 & 0 & \cdots & 0
		\\ 0 & 0 & 0 & 0 & \cdots & 0
		\\ \vdots & \vdots & \vdots & \vdots & \ddots & \vdots
		\\ 0 & 0 & 0 & 0 & \cdots & 0
	\end{pmatrix}.
\end{equation*}
If $u_1$ is significantly smaller than $u_2$ and $u_3$ (like $u_1 =
o(\min\{u_2,u_3\})$ or $u_1 \to 0$), we have $\rho_{\rm out,
out}(G^{\ast}) \to -1$. For $u_3 = 0$ and $u_1 \neq u_2$, we
have $\rho_{\rm out, out}(G^{\ast}) = -1$.

\section{Rewiring Algorithm}
\label{Append:rewire}

Given the vertex strength distribution, i.e., $p_k = \Pr(S = z_k)$, 
$k = 1, \ldots, |\mathcal{S}|$, for $z_k \in \mathcal{S}$, if, in 
addition, we assume that the probability of an edge being sampled is 
proportional to its weight, then the conditional probability of 
corresponding contribution added to the strength of the vertex at 
either end of that edge is given by~$q_k = z_k p_k / 
\sum_{z_h \in 
\mathcal{S}}z_h p_h$. The algorithm of \citet{Newman2003mixing} 
requires a transition matrix
$\bm{M} := (m_{ij})$ of dimension $(|\mathcal{S}| \times |\mathcal{S}|)$
such that all the row sums and column sums are equal to~$0$ with
$\sum_{z_j, z_k   \in \mathcal{S}} z_j z_k m_{jk} = 1$.
The functionality of $\bm{M}$ is to bridge $q_k$ with a 
strength-based joint distribution for links. Specifically, let 
$\ell_{jk}$ be the probability that a vertex with strength~$z_j$
and a vertex with strength~$z_k$ is connected. We have
\begin{equation}
	\label{eq:linkdist}
	\ell_{jk} = q_j q_k + \xi \sigma_q^2 m_{jk},
\end{equation}
where $\sigma_q^2$ is the empirical variance of distribution $q$.
The collection of $\ell_{jk}$ forms a strength-based distribution of
edge connection, so $\ell_{jk}$ is required to fall in $[0, 1]$ for
all $j, k \in V$. Under the current setting, fundamental algebra can 
show that $\sum_{jk} l_{jk} = 1$ for $l_{jk}$ defined in 
Equation~\eqref{eq:linkdist}. 

Note that all the constraints for $m_{jk}$'s from $\bm{M}$ are 
linear. Though there may be more than one qualified $\bm{M}$ in 
practice, we shall choose an ``optimal'' $\bm{M}$ according to some 
sort of criterion such as matrix norm. In what follows, the 
construction of $\bm{M}$ has become a quadratic programming problem, 
which can be done through standard statistical programming software. 
Specifically, we used the \texttt{quadprog} package
\citep{Turlach2019quadprog} for \proglang{R}.

Now, we are ready to give the extended rewiring algorithm. The 
inputs for our algorithm are the pre-specified vertex strength 
distribution $p_k$ and the true assortativity $\xi$.
\begin{enumerate}
	\item Sample $n$ strength values for the $n$ vertices from the strength
          distribution $p_k$, $k = 1, \ldots, |\mathcal{S}|$.
	\item Generate a random network on the $n$ vertices as described 
	above.
	\item Randomly select a pair of edges respectively denoted by 
	$(v_a, v_b)$ and $(v_c, v_d)$, where $v_a$ and $v_b$, $v_c$ and 
	$v_d$ are the vertices at the two ends of the selected edges. 
	Correspondingly, let 
	$s_a$, $s_b$, $s_c$ and $s_d$ be the strengths of those 
	vertices. At each rewiring step, the probability that one unit 
	weight is added to $(v_a, v_c)$ and $(v_b, v_d)$, and 
	simultaneously one unit weight is taken away from $(v_a, v_b)$ 
	and 
	$(v_c, v_d)$ is given by
	\begin{equation*}
		\begin{cases}
			\displaystyle \frac{\ell_{s_{a} s_{c}} \ell_{s_{b} 
			s_{d}}}{\ell_{s_{a} s_{b}} \ell_{s_{c} s_{d}}}, \qquad 
			&\mbox{if }
				\ell_{s_{a} s_{c}} \ell_{s_{b} s_{d}} < 
				\ell_{s_{a} s_{b}} \ell_{s_{c} s_{d}};
			\\ 1, \qquad & \mbox{otherwise}.
		\end{cases}
	\end{equation*}
	Note that, if $(v_a, v_c)$ or $(v_b, v_d)$ does not exist, and 
	rewiring is executed, an edge (or two edges) is created; 
	Similarly, if $(v_a, v_b)$ or $(v_c, v_d)$ is a unit-weight 
	edge, and rewiring is executed, the edge (or both edges) is 
	removed.
	\item Repeat the rewiring procedure (in step 3) for a 
	preset number of times, and return the resultant 
	network. It is worth mentioning that the rewiring number needs 
	to be sufficiently large in general, where the present study 
	follows the guidance given by~\cite{Bertotti2019configuration}.
\end{enumerate}

\section{Backbone Extraction}
\label{Append:backbone}

The basic goal of backbone extraction is to filter out the
non-essential edges.  \citet{Xu2019input} applied
a disparity filter method \citep{Serrano2009extracting} to extract
backbones from input-output networks.
Consider the null hypothesis that the normalized weights
corresponding to the outgoing or incoming edges of a vertex are
the spacings formed by uniformly distributed variables over the unit interval.
Let $\normoutw_{ij} = w_{ij} / \outs_i$ and
$\norminw_{ji} = w_{ji} / \ins_i$
be the normalized out- and in-strength of vertex~$i$, respectively.
The number of outgoing and incoming edges of~$i$
are its out-degree~$\outd_i$ and in-degree~$\ind_i$, respectively.
Given that there are $d > 1$ out- or in-edges for a vertex, under the null
hypothesis, a normalized weight has density
\[
  p(x; d) = (d - 1)(1 - x)^{d - 2}, \quad 0 < x < 1, \quad d > 1.
\]
The p-values of edge~$i$ with normalized out-strength $\outd_i$
and in-strength $\ind_i$ are, respectively,
\begin{align*}
  \alpha_{ij}^{\, \rm (out)} = \int_{\normoutw_{ij}}^1
  p(x; \outd_i) \dd x
  \quad
  \mbox{ and }
  \quad 
  \alpha_{ij}^{\, \rm (in)} = \int_{\norminw_{ij}}^1 p(x;  \ind_i) \dd x.
\end{align*}
Borrowing the idea of hypothesis testing, we preserve edge $e_{ij}$ if
$\alpha_{ij}^{\, \rm (out)} < \alpha$ or $\alpha_{ij}^{\, \rm (in)} < \alpha$ as the
null assumption of uniform assignment is rejected; otherwise,
$e_{ij}$ is filtered out. In other words, an edge is preserved if
the criterion is satisfied for at least one of the two vertices at
the ends. For the very rare cases where $\outd_i = \ind_j = 1$, 
some special treatments based on additional criteria are needed, as 
removing these edges may break the connectivity of the network, but 
these edges themselves do not actually contribute heterogeneity to 
the network. These rare cases do not occur in the WIONs, so 
the standard filtering procedure suffices.

\end{document}